\let\oldvec\vec

\documentclass[ruled,vlined,linesnumbered,commentsnumbered]{llncs}

\let\vec\oldvec

\usepackage[ruled,vlined,linesnumbered,commentsnumbered]{algorithm2e}
\usepackage{amsmath,amssymb,amsfonts}
\usepackage{graphicx}

\newcommand {\ignore} [1] {}

\begin{document}
\newtheorem{fact}[lemma]{Fact}

\title{An $\tilde{O}(\log^2 n)$-approximation algorithm for $2$-edge-connected dominating set}
\author{Amir Belgi\thanks{Part of this work was done as a part of author's M.Sc. Thesis at the Open University of Israel.}
\and Zeev Nutov} 

\institute{The Open University of Israel. \email{belgiamir251@gmail.com, nutov@openu.ac.il}}

\maketitle


\def\c     {\sc Connected}
\def\ec   {\sc Edge-Connected}

\def\ds   {\sc Dominating Set}
\def\dsu {\sc Dominating Subgraph}

\def\dt {\sc Dominating Subtree}

\def\au {\sc Augmentation}
\def\st  {\sc Steiner}
\def\su  {\sc Subset}
\def\nw  {\sc Node Weighted}


\def\si     {\sigma}
\def\eps  {\epsilon}
\def\al     {\alpha}

\def\empt {\emptyset}
\def\sem  {\setminus}
\def\subs  {\subseteq}

\def\t   {\tilde}
\def\h   {\hat}

\def\f   {\frac}
\def\opt {\mathsf{opt}}

\begin{abstract}
In the {\c} {\ds} problem we are given a graph $G=(V,E)$ and seek a minimum size dominating set $S \subs V$ 
such that the subgraph $G[S]$ of $G$ induced by $S$ is connected.
In the $2$-{\ec} {\ds} problem $G[S]$ should be $2$-edge-connected.  
We give the first non-trivial approximation algorithm for this problem, with expected approximation ratio $\t{O}(\log^2n)$. 
\end{abstract}

\section{Introduction} \label{s:intro}

Let $G=(V,E)$ be a graph.
A subset $S \subs V$ of nodes of $G$ is a {\bf dominating set} in $G$ if every $v\in V\sem S$ has a neighbors in $S$.
In the {\ds} problem the goal is to find a min-size dominating set $S$.
In the {\c} {\ds} problem the subgraph $G[S]$ of $G$ induced by $S$ should be connected.
This problem admits a tight approximation ratio $O(\log n)$, even in the node weighted case \cite{GK98,GK99}, 
based on \cite{KR}. 

A graph is {\bf $2$-edge-connected} if it contains $2$ edge disjoint paths between every pair of nodes.
We consider the following problem.

\medskip

\begin{center} \fbox{\begin{minipage}{0.97\textwidth}
\underline{$2$-{\ec} {\ds}} \\
{\em Input:} \ \ A graph $G=(V,E)$. \\
{\em Output:} A  min-size dominating set $S \subs V$ such that $G[S]$ is $2$-edge-connected.
\end{minipage}} \end{center}

\medskip

Given a distribution ${\cal T}$ over spanning trees of a graph $G$, the {\bf stretch} of ${\cal T}$ is
${\displaystyle \max_{uv \in E}} \mathbb{E}_{T \sim {\cal T}} \left[\f{d_T (u,v)}{d_G(u,v)}\right]$, where $d_H(u,v)$ denotes the 
distance between $u$ and $v$ in a graph $H$.
Let $\si=\si(n)$ denote the lowest known upper bound on the stretch that can be achieved 
by a polynomial time construction of such ${\cal T}$ for a graph on $n$ nodes. 
By the work of Abraham, Bartal, and Neiman \cite{ABN}, 
that is in turn based on the  work of Elkin, Emek, Spielman, and Teng \cite{EEST}, we have:
$$
\si(n)=O(\log n \cdot\log\log n\cdot(\log\log\log n)^{3})=\t{O}(\log n) \ .
$$

Our main result is:

\begin{theorem} \label{t:main}
$2$-{\ec} {\ds} admits an approximation algorithm with expected approximation ratio $O(\si \log n)=\t{O}(\log^2 n)$.
\end{theorem}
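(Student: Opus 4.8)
The plan is to proceed in two phases: first compute a {\c} {\ds}, and then augment it to be $2$-edge-connected, invoking the low-stretch distribution ${\cal T}$ only in the second phase. Since any $2$-edge-connected graph is connected, the optimal solution $S^*$ (with $|S^*|=\opt$) is in particular a connected dominating set, so $\opt_{\mathrm{cds}}\le\opt$. Running the known $O(\log n)$-approximation for {\c} {\ds} \cite{GK98,GK99} therefore yields a set $S_0$ with $G[S_0]$ connected and dominating and $|S_0|=O(\log n)\cdot\opt$. As $S_0$ is already dominating, every superset $S\supseteq S_0$ is dominating as well, so the whole task reduces to adding as few nodes as possible to make $G[S_0]$ be $2$-edge-connected.

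For the augmentation I would use the fact that a connected graph is $2$-edge-connected iff every edge lies on a cycle, i.e.\ iff it has no bridges. Let $B$ be the set of bridges of $G[S_0]$; contracting the $2$-edge-connected blocks of $G[S_0]$ yields a tree, so $|B|\le|S_0|-1=O(\log n)\cdot\opt$. Sample a single tree $T\sim{\cal T}$ and, for each bridge $uv\in B$, add to the solution all nodes and edges of the tree path $T[u,v]$. Assuming for the moment that no bridge is a tree edge, $T[u,v]$ avoids $uv$, so $T[u,v]$ together with $uv$ is a cycle through $uv$, and every tree edge on this path then also lies on a cycle. Hence all original bridges, all newly added tree edges, and all edges already inside the $2$-edge-connected blocks of $G[S_0]$ lie on cycles, and adding the remaining induced edges only helps, so $G[S]$ is $2$-edge-connected. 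Since each bridge $uv$ is an edge of $G$ we have $d_G(u,v)=1$, and the definition of stretch gives $\mathbb{E}_T[d_T(u,v)]\le\si$. The path $T[u,v]$ contributes at most $d_T(u,v)-1$ new nodes, so by linearity of expectation the expected augmentation cost is $\sum_{uv\in B}\mathbb{E}_T[d_T(u,v)-1]\le\si\,|B|=O(\si\log n)\cdot\opt$. Adding $|S_0|$ gives $\mathbb{E}[|S|]=O(\si\log n)\cdot\opt=\t{O}(\log^2 n)\cdot\opt$, as claimed.

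The main obstacle is the degenerate case in which a bridge $uv$ of $G[S_0]$ is itself an edge of $T$: then $T[u,v]$ is the single edge $uv$, which produces no cycle, so the construction need not be feasible. Two issues must be resolved. First, $uv$ is coverable at all only when $u$ and $v$ are $2$-edge-connected in $G$; I would guarantee this by first restricting attention to a single maximal $2$-edge-connected block of $G$ whose node set dominates $V$. The optimal $G[S^*]$ is $2$-edge-connected and hence contained in one such block, so this restriction loses nothing, and inside a block every edge has a parallel path. Second, and more delicately, a tree-edge bridge $uv$ must instead be covered through the fundamental cycle of some other edge $xy\in E$ crossing the cut that $uv$ defines in $T$; such an $xy$ exists because the block is $2$-edge-connected, and since $xy\in E$ its path $T[x,y]$ again has expected length at most $\si$. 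The difficulty is that this covering edge $xy$ is chosen \emph{after} $T$ is revealed, so the per-edge expected-stretch bound cannot be applied verbatim to an adaptively chosen pair. I expect the cleanest fix is to phrase the augmentation as a node-weighted covering of the bridge tree, a {\nw} {\su} {\st}-type problem, whose optimum is bounded obliviously to $T$ by the tree-path solution above; controlling that covering cost is the technical heart of the proof, with the two phases contributing the factors $\si$ and $\log n$ respectively.
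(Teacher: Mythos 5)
There is a genuine gap, and it is exactly the one the paper's introduction warns about: your plan is the ``dominate first, then augment'' strategy, which fails when the domination requirement is weaker than the connectivity requirement ($m=1<k=2$ here). Concretely, the set $S_0$ returned by the Guha--Khuller algorithm can be impossible to augment at \emph{any} cost: if $S_0$ contains both ends of a bridge $uv$ of $G$, then $uv$ is an edge of $G[S]$ for every $S \supseteq S_0$ and lies on no cycle of $G$, so no superset of $S_0$ induces a $2$-edge-connected graph. Your proposed fix --- restricting phase one to a maximal $2$-edge-connected subgraph $Q$ of $G$ containing the optimum --- does not restore the argument, because phase one is then no longer {\sc CDS}: you must find a connected dominating set of all of $V$ whose nodes are confined to the prescribed set $Q$. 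That is precisely {\sc Subset Steiner CDS} with $Q$ and $R=V\sem Q$, and Section~\ref{s:hard} (Lemma~\ref{l:SS<GST} together with Theorem~\ref{t:S>GST}) shows this problem is essentially as hard as {\sc Group Steiner Tree} even with unit weights, so the $O(\log n)$ algorithm you invoke simply does not apply to it. The paper escapes this trap only because the {\sc Subset Steiner CDS} instances produced by \emph{its} reduction have the special property $q\leq 2$; an arbitrary $2$-edge-connected block $Q$ carries no such guarantee.

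The second gap is the adaptivity issue you flag yourself, and it is not a deferred technicality but the crux. Your assertion that the covering problem's optimum ``is bounded obliviously to $T$ by the tree-path solution above'' has no justification: the only oblivious benchmark available is the optimal solution $S^*$, and $S^*$ need not cover the bridges of $G[S_0]$ at all. Indeed $G[S_0\cup S^*]$ can fail to be $2$-edge-connected: take $S^*$ a triangle $\{x,y,u\}$, a node $v$ adjacent to $u$, $w$, $r$, with $w,r$ adjacent to $x$; then $S_0=\{u,v\}$ is a valid CDS, $G$ is $2$-edge-connected, yet in $G[S_0\cup S^*]$ the node $v$ has degree one, so $uv$ is a bridge. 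Thus neither the existence of a cheap augmentation of $S_0$ nor an algorithm finding one is established, and no fixed edge set is available against which linearity of expectation can be applied for bridges that land in $E_T$. The paper's route is designed to avoid exactly this: it passes to the edge version ($2$-{\ec} {\dsu}), samples $T$ \emph{first}, and then solves domination and coverage \emph{jointly} as the {\dt} problem, so that the comparison is against the fixed optimal edge set $J\sem E_T$ (Lemma~\ref{l:si-loss}), for which $\mathbb{E}\left[|T_f|\right]\leq\si$ holds edge by edge; the $O(\log n)$ factor then comes from the structure of the connectivity-domination graph (Theorem~\ref{t:main'}), not from a pre-computed dominating set.
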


In the rest of the Introduction we discuss motivation, related problems, and give a 
road-map of the proof of Theorem~\ref{t:main}. 

It is a common problem in network design to route messages through the network. 
Many routing protocols exploit flooding strategy in which every node broadcasts the message to all of its neighbors. 
However, such protocols suffer from a large amount of redundancy. 
Ephremides, Wieselthier, and Baker \cite{E87} introduced the idea of constructing a {\bf virtual backbone} of a network.
A virtual backbone is often chosen to be a {\bf connected dominating set} --  a connected subgraph (a tree) on a dominating node set.
Then only the nodes of the tree are involved in the routing,
which may significantly reduce the number of messages the routing protocol generates. 
Moreover, we only need to maintain the nodes in the tree to keep the message flow. 
This raises the natural problem of constructing a ``cheap'' connected virtual backbone $H$.
Usually ``cheap'' means that $H$ should have a minimum number of edges or nodes,
or, more generally, that we are given edge costs/node weights, and $H$ should have a minimum cost/weight. 

In many cases we also require from the virtual backbone to be robust to edge or node failures.
A graph $G$ is {\bf $k$-edge-connected} if it contains $k$ edge disjoint paths between every pair of nodes; 
if the paths are required to be internally node disjoint then $G$ is {\bf $k$-connected}.
A subset $S$ of nodes in a graph $G=(V,E)$ is an {\bf $m$-dominating set} if every $v \in V \sem S$ has at least $m$ neighbors in $S$.
In the {\sc Min-Weight $k$-Connected $m$-Dominating Set} problem we seek a minimum node weight $m$-dominating set $S$
such that the subgraph $G[S]$ of $G$ induced by $S$ is $k$-connected. 
This problem was studied in many papers, both in general graphs and in unit disk graph, 
for arbitrary weights and also for unit weights; the unit weights case is the {\sc $k$-Connected $m$-Dominating Set} problem. 
We refer the reader to recent papers \cite{F,ZZMD,N-CDS}.
In the {\sc Min-Cost $k$-Connected $m$-Dominating Subgraph} problem, we seek to minimize 
the cost of the edges of the subgraph rather than the weight of the nodes. 
We observe that for unit weights/costs, the approximability of the {\sc $k$-Connected $m$-Dominating Set} problem 
is equivalent to the one of the {\sc $k$-Connected $m$-Dominating Subgraph} problem, up to a factor of $2$;
this is so since the number of edges in a minimally $k$-connected graph is between $kn/2$ and $kn$.  
The same holds also for the $k$-edge-connectivity variant of these problems.

Most of the work on the {\sc Min-Weight $k$-Connected $m$-Dominating Set} problem focused 
on the easier case $m \geq k$, when the union of a partial solution and a feasible solution is
always feasible. This enables to construct the solution by computing first an $\al$-approximate $m$-dominating set
and then a $\beta$-approximate augmenting set to satisfy the connectivity requirements;
the approximation ratio is then bounded by the sum $\al+\beta$ of the ratios of the two sub-problems.
The currently best ratios when $m\geq k$ are \cite{N-CDS}:
$O(k \ln n)$ for general graphs, $\min\left\{\f{m}{m-k},k^{2/3}\right\} \cdot O(\ln^2 k)$ for unit disc graphs, 
and $\min\left\{\f{m}{m-k},\sqrt{k}\right\} \cdot O(\ln^2 k)$ for unit disc graphs with unit node weights.
However, when $m<k$ this approach does not work, and the only non trivial ratio known is for (unweighted) unit disk graphs,
due to Wang et al. \cite{WKAG}, where they obtained a constant ratio for $k\leq 3$ and $m=1,2$.
It is an open question to obtain a non-trivial ratio for the (unweighted) 
{\sc $2$-Connected Dominating Set} problem in general graphs. 

The $2$-{\ec} {\ds} problem that we consider is the edge-connectivity version of the above problem, 
when the virtual backbone should be robust to single edge failures.
As was mentioned, the approximability of this problem is the same, up to a factor of $2$, 
as that of the $2$-{\ec} {\dsu} problem that seeks to minimize 
the number of edges of the subgraph rather than the number of nodes.
We prove Theorem~\ref{t:main} for the latter problem using a two stage reduction. 
Our overall approximation ratio $O(\si \log n)$ is a product of the first reduction fee $\si$ and 
the approximation ratio $O(\log n)$ for the problem obtained from the second reduction.

In the first stage (see Section~\ref{s:red}) 
we use the probabilistic embedding into a spanning tree of \cite{ABN} with stretch $\si=\tilde{O}(\log n)$
to reduce the problem to a ``domination version'' of the so called {\sc Tree Augmentation} problem (c.f. \cite{EFKN}); 
in our problem, which we call {\sc Dominating Subtree}, 
we are given a spanning tree $T$ in $G$ and seek a min-size edge set $F \subs E \sem T$ 
and a subtree $T'$ of $T$, such that $T'$ dominates all nodes in $G$ and $T' \cup F$ is $2$-edge-connected.
This reduction invokes a factor of $\si=\tilde{O}(\log n)$ in the approximation ratio. 
Gupta, Krishnaswamy, and Ravi \cite{GKR2} used such tree embedding to give a generic framework 
for approximating various restricted $2$-edge-connected network design problems, 
among them the {\sc $2$-Edge-Connected Group Steiner Tree} problem.
However, all their algorithms are based on rounding an appropriate LP relaxations, while our algorithm is purely combinatorial
and uses different methods. 

In the second stage (see Section~\ref{s:red'}) we reduce the {\sc Dominating Subtree} problem to  
the {\sc Subset Steiner Connected Dominating Set} problem \cite{GK98}.
While we show in Section~\ref{s:hard} that in general this problem is as hard as the {\sc Group Steiner Tree} problem,
the instances that are derived from the reduction have special properties that will enable us to obtain ratio $O(\log n)$. 
We note that the reduction we use is related to the one of Basavaraju et al. \cite{BFGM},
that showed a relation between the {\sc Tree Augmentation} and the {\sc Steiner Tree} problems.

\section{Reduction to the dominating subtree problem} \label{s:red}

To prove Theorem~\ref{t:main} we will consider the following variant of our problem:

\begin{center} \fbox{\begin{minipage}{0.97\textwidth}
\underline{$2$-{\ec} {\dsu}} \\
{\em Input:} \ \ A graph $G=(V,E)$. \\
{\em Output:} A $2$-edge-connected subgraph $(S,J)$
of $G$ with $|J|$ minimum such that $S$ is a dominating set in $G$. 
\end{minipage}} \end{center}

Since $|S| \leq |J| \leq 2(|S|-1)$ holds for any edge-minimal $2$-edge-connected graph $(S,J)$, 
then if $2$-{\ec} {\dsu} admits ratio $\rho$ then $2$-{\ec} {\ds} admits ratio $2\rho$.
Thus it is sufficient to prove Theorem~\ref{t:main} for the $2$-{\ec} {\dsu} problem.

For simplicity of exposition we will assume that we are given a single spanning tree $T=(V,E_T)$ with stretch $\si$, namely that 
$$
|T_f| \leq \si  \ \ \ \ \forall f \in E \sem E_T
$$
where $T_f$ denotes the path in the tree $T$ between the endnodes of $f$.
We say that {\bf $f \in E \sem E_T$ covers $e \in E_T$} if $e \in T_f$.
For an edge set $F$ let $T_F=\cup_{f \in F} T_f$ denote the forest formed by the tree edges of $T$
that are covered by the edges of $F$.
The following two lemmas give some cases when $T_F \cup F$ is a $2$-edge-connected graph.

\begin{lemma} \label{l:feasible}
If $F \subs E \sem E_T$ then $T_F \cup F$ is $2$-edge-connected if and only if $T_F$ is a tree.
\end{lemma}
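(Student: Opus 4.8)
The plan is to prove both directions of the equivalence using two elementary facts: that every non-tree edge $f \in F$ has both of its endnodes joined inside $T_F$ by its tree path $T_f$, and that a connected graph is $2$-edge-connected if and only if it is bridgeless, equivalently every edge of it lies on a cycle. The first fact controls connectivity, and the second handles the $2$-edge-connectivity once connectivity is known.

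First I would dispose of the ``only if'' direction, which is really a statement about connectivity. The key observation is that the node set of $T_F \cup F$ is exactly the set of nodes spanned by the forest $T_F$: for each $f = uv \in F$ the path $T_f \subseteq T_F$ already contains both $u$ and $v$, so $F$ adds no new nodes. Consequently each edge $f \in F$ joins two nodes lying in the same connected component of $T_F$, and hence adding the edges of $F$ to $T_F$ cannot merge distinct components. Therefore $T_F \cup F$ and $T_F$ have exactly the same connected components. If $T_F \cup F$ is $2$-edge-connected it is in particular connected, which forces $T_F$ to be connected; since $T_F$ is a subforest of $T$ it is acyclic, and a connected acyclic graph is a tree.

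For the ``if'' direction, assume $T_F$ is a tree. Then $T_F$ is connected, and so is the supergraph $T_F \cup F$, so it remains only to rule out bridges by showing every edge lies on a cycle. An edge $f \in F$ together with its tree path $T_f \subseteq T_F$ forms a cycle through $f$. An edge $e \in T_F$ belongs, by the very definition $T_F = \bigcup_{f \in F} T_f$, to some $T_f$ with $f \in F$, and then the same cycle $f \cup T_f$ passes through $e$. Thus no edge of $T_F \cup F$ is a bridge, so the connected graph $T_F \cup F$ is $2$-edge-connected, as required.

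I do not expect a genuine obstacle here; the proof is short. The only points that demand care are the bookkeeping of the vertex set of $T_F \cup F$ (so that connectivity of the whole graph reduces cleanly to connectivity of $T_F$, and the $F$-edges are seen not to bridge components) and the explicit invocation of the characterization ``connected and bridgeless equals $2$-edge-connected,'' which is what turns the simple cycle-through-every-edge argument into the desired conclusion.
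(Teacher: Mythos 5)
Your proof is correct, and your ``only if'' direction coincides with the paper's (the paper phrases it contrapositively: if $T_F$ is not a tree, then any connected component $C$ of $T_F$ remains a component of $T_F\cup F$, so $T_F\cup F$ is disconnected); both versions rest on the same observation that every $f\in F$ has its two endnodes already joined inside $T_F$ by the path $T_f\subseteq T_F$, so adding $F$ cannot merge components, and connectivity then forces the acyclic subgraph $T_F$ of $T$ to be a tree. Where you genuinely diverge is the ``if'' direction: the paper invokes, as a black box, the known tree-augmentation fact (c.f.\ \cite{EFKN}) that for a tree $T'$ and an edge set $F$ on its nodes, $T'\cup F$ is $2$-edge-connected iff $F$ covers all of $T'$, and applies it to $T'=T_F$, where the covering condition holds by the very definition $T_F=\bigcup_{f\in F}T_f$ (together with the implicit point that paths in the subtree $T_F$ agree with paths in $T$). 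You instead inline an elementary proof of exactly the implication needed: every $f\in F$ lies on the cycle $f\cup T_f$, every $e\in T_F$ lies in some $T_f$ and hence on that same cycle, so the connected graph $T_F\cup F$ is bridgeless, and a connected bridgeless graph is $2$-edge-connected. Your route buys a self-contained argument requiring no external lemma; the paper's buys brevity and keeps the lemma aligned with the edge-coverage viewpoint of tree augmentation that it reuses later (e.g.\ via Menger's theorem in Lemma~\ref{l:sv}). Both are sound.
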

\begin{proof}
Note that every $f \in F$ has both ends in $T_F$.
It is known (c.f. \cite{EFKN}) that if $T'$ is a tree and $F$ is an additional edge set on the node set of $T'$, 
then $T' \cup F$ is $2$-edge-connected if and only if $T'_F=T'$.
This implies that if $T_F$ is a tree then $T_F \cup F$ is $2$-edge-connected.
Now suppose that $T_F$ is not a tree. Let $C$ be a connected component of $T_F$. 
Then no edge in $F \cup T_F$ has exactly one end in $C$. Thus $C$ is also a connected component of $T_F \cup F$,
so $T_F \cup F$ is not connected.
\qed
\end{proof}

\begin{lemma} \label{l:si-loss}
If $(S,J)$ is a $2$-edge-connected subgraph of $G$ then $T_{J \sem E_T}$ is a tree.
\end{lemma}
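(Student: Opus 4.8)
The plan is to set $F = J \sem E_T$ and to show that the subforest $T_F$ of $T$ is in fact connected; since any subgraph of the tree $T$ is acyclic, connectivity of $T_F$ immediately yields that $T_F$ is a tree. Throughout I would use two standard facts about a $2$-edge-connected graph: it has minimum degree at least $2$, and it remains connected after the deletion of any single edge.

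The heart of the argument is the claim that \emph{every tree edge $e \in J \cap E_T$ is covered by some $f \in F$}. To prove it, delete $e$ from $T$: this splits $T$ into two subtrees inducing a partition $(V_a, V_b)$ of $V$, and $e$ is the unique edge of $E_T$ crossing this partition. Since $(S,J)$ is $2$-edge-connected, $(S, J \sem \{e\})$ is still connected, so it contains a path $P$ joining the two ends of $e$; as those ends lie on opposite sides of $(V_a, V_b)$, some edge $g \neq e$ of $P$ also crosses the partition. A tree edge crossing $(V_a, V_b)$ would have to equal $e$, so $g \in E \sem E_T$, i.e. $g \in F$; and $g$ crosses the cut defined by $e$, which is precisely to say that $g$ covers $e$. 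In particular $e \in T_F$, so $J \cap E_T \subs T_F$.

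Next I would verify that every node of $S$ belongs to $T_F$. Each $v \in S$ has degree at least $2$ in $(S,J)$ and hence is incident to some edge $g \in J$: if $g \in F$ then $v$ is an end of $g$ and so lies on $T_g \subs T_F$, and if $g \in J \cap E_T$ then by the claim $g \in T_F$, so again $v \in T_F$. With $S$ thus contained in the node set of $T_F$, I would run a connectivity walk: for any $r, v \in S$, take a path between them in the connected graph $(S,J)$ and observe that each of its edges keeps its two ends inside a single component of $T_F$ — a non-tree edge $f \in F$ does so because both ends lie on the connected path $T_f \subs T_F$, and a tree edge $e \in J \cap E_T$ does so because $e$ is itself an edge of $T_F$. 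Hence all of $S$ lies in one component of $T_F$; and since every node of $T_F$ sits on some $T_f$ whose ends are in $S$, that single component exhausts $T_F$.

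The step I expect to be the main obstacle is the covering claim, since it is the only place where $2$-edge-connectivity (rather than mere connectivity) is genuinely used: it requires translating the statement ``$e$ is not a bridge of $(S,J)$'' into ``the fundamental cut of $e$ in $T$ is crossed by a non-tree edge of $J$''. Once this bridge-to-cut translation is in place, the remaining two steps are routine connectivity bookkeeping.
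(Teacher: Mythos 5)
Your proof is correct, but it takes a genuinely different route from the paper's. The paper leans on its own Lemma~\ref{l:feasible}: since that lemma says (for $F=J\setminus E_T$) that $T_F\cup F$ is $2$-edge-connected if and only if $T_F$ is a tree, it suffices to show $T_F\cup F$ is $2$-edge-connected, and the paper gets this by observing that the graph is obtained from the $2$-edge-connected graph $(S,J)$ by sequentially adding the paths $T_f$, $f\in F$, each joining two nodes of the current graph --- an operation known to preserve $2$-edge-connectivity --- and finishing by induction. You instead prove connectivity of $T_F$ directly: the fundamental-cut argument shows every tree edge $e\in J\cap E_T$ is covered by some $f\in F$, then $S\subseteq V(T_F)$, and a walk along paths of $(S,J)$ shows $T_F$ has a single component; acyclicity is free since $T_F$ is a subgraph of $T$. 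What each approach buys: the paper's proof is shorter because it recycles Lemma~\ref{l:feasible} and a standard path-addition fact, but it is terse --- for the graph obtained by adding the paths $T_f$ to $(S,J)$ to literally be $T_F\cup F$, one needs $J\cap E_T\subseteq E(T_F)$, which is exactly your covering claim and is left implicit in the paper. Your argument is more elementary and self-contained (it needs only that a $2$-edge-connected graph stays connected after deleting one edge, plus the uniqueness of the tree edge in a fundamental cut), and it makes that hidden step explicit; the price is extra length and no reuse of the machinery already set up.
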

\begin{proof}
By Lemma~\ref{l:feasible} the statement is equivalent to claiming that $T_{J \sem E_T} \cup (J \sem E_T)$ is $2$-edge-connected.
To see this, note that $T_{J \sem E_T} \cup (J \sem E_T)$ is obtained from the $2$-edge-connected graph $(S,J)$
by sequentially adding for each $f \in J \sem E_T$ the path $T_f$.
It is known that adding a simple path $P$ between two nodes of a $2$-edge-connected graph  
results in a $2$-edge-connected; this is so also if $P$ contains some edges of the graph. 
The statement now follows by induction.
\qed
\end{proof}

Let us consider the following problem. 

\medskip

\begin{center} \fbox{\begin{minipage}{0.98\textwidth}
\underline{{\dt}} \\
{\em Input:} \ \ A graph $G=(V,E)$ and a spanning tree $T=(V,E_T)$ in $G$.  \\
{\em Output:} A min-size edge set $F \subs E \sem E_T$ such that $T_F$ is a dominating tree.
\end{minipage}} \end{center}

\medskip

From Lemmas \ref{l:feasible} and \ref{l:si-loss} we have the following.

\begin{corollary}
Let $(S,J)$ be an optimal solution of a $2$-{\ec} {\dsu} instance $G$.
Let $T$ be a spanning tree in $G$ with stretch $\si$ and $F$ a $\rho$-approximate solution to the {\dt} instance $G,T$.
Then $T_F \cup F$ is a feasible solution to the $2$-{\ec} {\dsu} instance and $|F \cup E(T_F)| \leq \rho(\si+1)|J|$.
\end{corollary}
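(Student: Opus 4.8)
The plan is to prove the two claims—feasibility and the size bound—separately, using Lemmas~\ref{l:feasible} and~\ref{l:si-loss} together with the stretch bound $|T_f| \le \si$.

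Feasibility is immediate from the definitions. Since $F$ solves the {\dt} instance $G,T$, the forest $T_F$ is a tree whose node set $V(T_F)$ dominates $V$. As $T_F$ is a tree, Lemma~\ref{l:feasible} yields that $T_F \cup F$ is $2$-edge-connected; since its node set $V(T_F)$ is a dominating set, $(V(T_F),\,F \cup E(T_F))$ is a feasible solution of the $2$-{\ec} {\dsu} instance.

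For the size bound I would first pass from $|F \cup E(T_F)|$ to $|F|$. Because $F \subs E \sem E_T$ while $E(T_F) \subs E_T$, the two sets are disjoint, so $|F \cup E(T_F)| = |F| + |E(T_F)|$; and since $E(T_F) = \bigcup_{f \in F} T_f$ with every $|T_f| \le \si$, we get $|E(T_F)| \le \si|F|$ and hence $|F \cup E(T_F)| \le (\si+1)|F|$. As $F$ is $\rho$-approximate, $|F| \le \rho\cdot\opt$, where $\opt$ denotes the optimal {\dt} value, so everything reduces to proving $\opt \le |J|$.

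To bound $\opt$ I would exhibit a single feasible {\dt} solution of size at most $|J|$, namely $F_0 := J \sem E_T$, which satisfies $|F_0| \le |J|$. Lemma~\ref{l:si-loss} already gives that $T_{F_0}$ is a tree, so the only remaining point is that $V(T_{F_0})$ dominates $V$; since a superset of a dominating set is dominating and $S$ dominates $V$, it suffices to show $S \subs V(T_{F_0})$. This containment is the crux, and it is where I expect the real work to lie: it cannot be read off from the edge sets alone and must exploit the $2$-edge-connectivity of $(S,J)$. Mirroring the proof of Lemma~\ref{l:si-loss}, I would show that each tree edge of $J$ is covered by some non-tree edge of $J$: for $e \in J \cap E_T$, deleting $e$ splits $T$ into parts $A,B$ with $e$ the unique tree edge across the cut, so $2$-edge-connectivity of $(S,J)$ forces a second $J$-edge across $(A,B)$, necessarily a non-tree edge $f \in F_0$ whose path $T_f$ then contains $e$. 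Thus $J \cap E_T \subs E(T_{F_0})$, so $T_{F_0} \cup F_0$ contains every edge of $J$ and hence every node of $S$, giving $S \subs V(T_{F_0})$. Combining the inequalities yields $|F \cup E(T_F)| \le (\si+1)|F| \le (\si+1)\rho\,\opt \le \rho(\si+1)|J|$, as required.
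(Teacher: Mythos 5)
Your proposal is correct and follows the same skeleton as the paper's proof: feasibility of $T_F \cup F$ via Lemma~\ref{l:feasible}, the bound $|E(T_F)| \le \sigma |F|$ from the stretch, and the key inequality $|F| \le \rho|J|$ obtained by exhibiting $J \setminus E_T$ as a feasible solution of the {\sc Dominating Subtree} instance. The one place where you genuinely go beyond the paper is the domination property of $T_{J \setminus E_T}$: the paper simply writes that ``by Lemma~\ref{l:si-loss}, $J \setminus E_T$ is a feasible solution,'' but Lemma~\ref{l:si-loss} as stated only yields that $T_{J \setminus E_T}$ is a \emph{tree}, not that it is \emph{dominating}. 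You correctly identified this as the crux and supplied the missing argument: for each $e \in J \cap E_T$, the cut of $T$ determined by deleting $e$ must be crossed by a second edge of $J$ (by $2$-edge-connectivity of $(S,J)$), necessarily a non-tree edge $f$, whose path $T_f$ then covers $e$; hence $J \cap E_T \subseteq E(T_{J \setminus E_T})$, so $T_{J \setminus E_T} \cup (J \setminus E_T)$ contains every edge of $J$, giving $S \subseteq V(T_{J \setminus E_T})$ and thus domination. In the paper this fact is only implicit: the proof of Lemma~\ref{l:si-loss} asserts that $T_{J \setminus E_T} \cup (J \setminus E_T)$ ``is obtained from $(S,J)$ by sequentially adding the paths $T_f$,'' which already presupposes that every node of $S$ and every tree edge of $J$ ends up inside $T_{J \setminus E_T}$. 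So your version is not a different route, but it is more complete -- it closes a step that the paper's citation of Lemma~\ref{l:si-loss} does not by itself cover.
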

\begin{proof}
$T_F \cup F$ is a feasible solution to the $2$-{\ec} {\dsu} instance by 
the definition of the {\dt} problem and Lemma~\ref{l:feasible}. 
By Lemma~\ref{l:si-loss}, $J \sem E_T$ is a feasible solution to the {\dt} instance,
thus $|F| \leq \rho |J \sem E_T| \leq \rho|J|$.
Since $T$ has stretch $\si$ we get $|E(T_F)| \leq \si |F| \leq \si \rho |J|$.
\qed
\end{proof}

Hence to finish the proof of Theorem~\ref{t:main} is is sufficient to prove the following  theorem, 
that may be of independent interest.

\begin{theorem} \label{t:main'}
The {\dt} problem admits approximation ratio $O(\log n)$.
\end{theorem}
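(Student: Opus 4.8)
The plan is to prove Theorem~\ref{t:main'} by reducing the {\dt} problem to the {\su} {\st} {\c} {\ds} problem on a suitable auxiliary graph, and then to solve the resulting structured instances with ratio $O(\log n)$. First I would encode the links as vertices: let $H$ be the graph whose vertex set is $L=E\sem E_T$ together with one terminal $t_v$ for each $v\in V$, where two links $f,f'\in L$ are adjacent whenever their tree paths meet ($T_f\cap T_{f'}\neq\empt$), and a link $f$ is adjacent to a terminal $t_v$ whenever $V(T_f)$ meets the closed neighborhood $N[v]=\{v\}\cup N(v)$. Then a set $F\subs L$ is a feasible {\dt} solution if and only if $F$ induces a connected subgraph of $H$ that dominates all terminals $\{t_v\}$, and the two objectives coincide. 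The domination equivalence is immediate from the definition of adjacency to terminals, while the connectivity equivalence ``$T_F$ is a tree $\iff$ $F$ is connected in $H$'' uses the standard fact that the union of a family of connected subtrees of a tree is connected exactly when their intersection graph is connected. This reduction is in the spirit of the Tree-Augmentation-to-Steiner-Tree correspondence of \cite{BFGM}, and it identifies the {\dt} problem with a {\su} {\st} {\c} {\ds} instance in which the ``groups'' $\{f:V(T_f)\cap N[v]\neq\empt\}$ are all governed by the single tree $T$.

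Second, I would approximate this instance by a two-phase Guha--Khuller-style scheme \cite{GK98}. In the domination phase I run greedy set cover over the terminals, repeatedly adding the link whose path dominates the most still-undominated nodes of $V$; this produces a (not necessarily connected) link set $F_0$ with $|F_0|=O(\log n)\cdot\tau$, where $\tau$ is the minimum number of links whose paths dominate $V$. Since an optimal {\dt} solution $F^*$ in particular dominates $V$, we have $\tau\le|F^*|$, so $|F_0|=O(\log n)\cdot|F^*|$. In the connection phase I merge the connected components of $T_{F_0}$ into a single subtree by adding further links, obtaining a final feasible $F\supseteq F_0$ with $T_F$ a tree.

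The main obstacle is the connection phase, and this is exactly where the special structure of the reduced instances is essential: as shown in Section~\ref{s:hard}, for a general {\su} {\st} {\c} {\ds} instance the domination groups are arbitrary and connecting a prescribed dominating set is as hard as {\sc Group Steiner Tree}, so no amortization is possible. To circumvent this I would aim to prove a connection lemma: any link set $F_0$ that dominates $V$ can be augmented to a connected link set using only $O(|F^*|)$ additional links, which together with the domination bound yields $|F|=O(\log n)\cdot|F^*|$ and hence the theorem. The intended leverage is that connectivity between links is path-intersection inside the single tree $T$, and that $T_{F^*}$ is itself a connected dominating subtree by feasibility, so that $F^*$ should supply, at constant amortized cost, the bridging links needed to merge the components of $T_{F_0}$ along $T$. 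Turning this intuition into a valid charging---bounding the number of bridging links per component so the total stays $O(|F^*|)$ and does not pick up an extra logarithmic factor---is the delicate technical core of the argument, and is precisely the step that must fail for general instances by the reduction of Section~\ref{s:hard}.
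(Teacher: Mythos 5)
Your first stage is exactly the paper's reduction: your graph $H$ is the connectivity-domination graph of Definition~\ref{d:con-dom}, and your feasibility equivalence is Corollary~\ref{c:H}, so that part is correct. The gap is the second stage. The paper does \emph{not} dominate first and connect afterwards; your ``connection lemma'' is precisely the step you never prove, and it is the entire difficulty of the theorem. Two concrete problems. First, even if true, the lemma as you motivate it is only existential: one can indeed show (by the same bridge-plus-feasibility case analysis the paper uses for its $q\le 2$ lemma) that every component of $T_{F_0}$ disjoint from $T_{F^*}$ can be merged with $T_{F^*}$ by a \emph{single} link -- take $v$ in the component, a dominating $G$-edge $uv$ with $u\in T_{F^*}$, and either that edge is a link or it is a covered tree edge, the uncovered case being excluded because it would make $uv$ a bridge and force one side to be a singleton. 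But this argument uses $F^*$ itself as the connecting backbone, which the algorithm does not know. Second, the natural algorithmic completion loses a logarithm: connecting the components of $F_0$ is a node-weighted Steiner tree problem in the link-intersection graph with the components as terminals, whose optimum you can only bound by $|F^*|$ plus the number of components of $T_{F_0}$, i.e.\ by $O(\log n)|F^*|$; running Klein--Ravi \cite{KR} on it costs another $O(\log n)$ factor, giving $O(\log^2 n)$ overall. The Guha--Khuller trick for ordinary {\sc CDS} (connect a dominating set with constant overhead per component) does not transfer here, because two components of a dominating link set can be arbitrarily far apart in the intersection graph: there is no analogue of the distance-$3$ property, and your own appeal to Section~\ref{s:hard} shows why no generic amortization can exist.

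The paper's route avoids the two-phase split entirely. It runs the $O(\log n)$ algorithm of \cite{GK98} for {\sc Steiner CDS} on $\h{G}$ while \emph{allowing} the $V$-side nodes to be used as connectors (the relaxed optimum is at most the {\dt} optimum), and then repairs the output tree locally: every $V$-node $r$ used as a connector, say of tree-degree $\deg(r)$, is replaced by $\deg(r)-1$ paths through link-nodes, each with at most $q$ internal nodes, and $\sum_r (\deg(r)-1)$ is at most the number of link-nodes in the tree. The structural lemma $q\le 2$ for connectivity-domination graphs -- proved by exactly the bridge/feasibility analysis sketched above -- bounds the repair by a factor $q+1=3$, preserving the ratio $O(\log n)$. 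This ``solve relaxed, then repair'' idea together with the $q\le 2$ lemma is the missing content of your proposal: with it your connection phase becomes unnecessary, and without it your write-up either rests on an unproven core or, if completed naively, proves only an $O(\log^2 n)$ ratio.
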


\section{Reduction to subset connected dominating set} \label{s:red'}

In this section we reduce the {\dt} problem to the {\su} {\st} {\c} {\ds}, 
and show that the special instances that arise from the reduction admit ratio $O(\log n)$.
The justification of the reduction is given in the following lemma.

\begin{lemma} \label{l:sv}
Let $T=(V,E_T)$ be a tree and $F$ and edge set on $V$, and let $s,t \in V$. 
Let $H=(F \cup V,I)$ be a bipartite graph where $I=\{fv: f \in F, v \in V \cap T_f\}$.
Then $T \cup F$ has $2$ edge disjoint $st$-paths if and only if $H$ has an $st$-path.
\end{lemma}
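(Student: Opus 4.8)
The plan is to route both equivalences through a single intermediate condition. Let $P$ denote the path in $T$ between $s$ and $t$. I claim that both ``$T \cup F$ has two edge-disjoint $st$-paths'' and ``$H$ has an $st$-path'' are each equivalent to the statement that \emph{every edge of $P$ is covered by some $f \in F$}. Proving these two equivalences separately is cleaner than trying to read off the paths on one side directly from the paths on the other.

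First I would handle the equivalence with $H$. An $st$-path in $H$ is an alternating sequence $s=v_0,f_1,v_1,f_2,\dots,f_k,v_k=t$ with $v_i \in V$, $f_i \in F$, and $v_{i-1},v_i \in V \cap T_{f_i}$ for every $i$. Since $v_{i-1},v_i$ both lie on the tree path $T_{f_i}$, the subpath of $T$ between them is contained in $T_{f_i}$; hence the union of these consecutive tree subpaths is a connected subgraph of $T$ (consecutive pieces share $v_i$) containing $s$ and $t$, and therefore contains $P$. Every edge of $P$ thus lies on some $T_{f_i}$, i.e.\ is covered. Conversely, if every edge $e=u_{i-1}u_i$ of $P=(u_0,\dots,u_m)$ is covered by some $f_i\in F$, then $u_{i-1},u_i \in V\cap T_{f_i}$, so $u_{i-1},f_i,u_i$ is a path in $H$; concatenating these pieces gives an $st$-walk, and hence an $st$-path, in $H$.

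Next I would show that two edge-disjoint $st$-paths exist in $T \cup F$ if and only if every edge of $P$ is covered, arguing through Menger's theorem: two edge-disjoint $st$-paths exist iff no single edge separates $s$ from $t$. For one direction, if some edge $e\in P$ is uncovered, then deleting the tree edge $e$ splits $V$ into the two sides $X,Y$ of its fundamental cut, with $s$ and $t$ on opposite sides because $e\in P$; an edge $f\in F$ crosses this cut exactly when it covers $e$, so no $F$-edge crosses and $e$ is a bridge separating $s$ from $t$, ruling out two edge-disjoint paths. For the converse, suppose every edge of $P$ is covered and let $g$ be any edge; I must argue $g$ does not separate $s$ from $t$. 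If $g\in F$, then $(T\cup F)-g$ still contains the spanning tree $T$ and is connected. If $g$ is a tree edge off $P$, then $P$ avoids $g$, so $s$ and $t$ stay connected. If $g\in P$, then $g$ is covered by some $f\in F$, and this $f$ crosses the fundamental cut of $g$, so deleting $g$ leaves $s$ and $t$ connected. In every case no single edge separates $s$ from $t$, so by Menger two edge-disjoint $st$-paths exist.

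The main obstacle is the direction ``every edge of $P$ covered $\Rightarrow$ two edge-disjoint $st$-paths,'' where a direct construction of the two paths is awkward because the covering edges may overlap and their detours interfere. Phrasing it instead as the absence of a one-edge $st$-cut and reducing to the per-edge fundamental-cut analysis above sidesteps this difficulty; the only point requiring care is the bookkeeping that an $F$-edge crosses the fundamental cut of a tree edge $e$ precisely when it covers $e$.
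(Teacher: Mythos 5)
Your proof is correct and takes essentially the same route as the paper's: both reduce each side of the equivalence to the intermediate condition that every edge of the tree $st$-path $P$ is covered by some $f \in F$, invoking Menger's theorem for the graph side. The only difference is one of detail: the paper cites the Menger-based equivalence as known and argues the $H$-side via a furthest-reachable-node argument plus a cut argument, while you prove both equivalences in full (fundamental-cut analysis for Menger, and a union-of-subpaths connectivity argument plus explicit concatenation for $H$), which is more self-contained but not a different approach.
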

\begin{proof}
Let $P=T_{st}$ be the $st$-path in $T$. 
By Menger's Theorem, $T \cup F$ has $2$ edge disjoint $st$-paths if and only if for every $e \in P$ there is $f \in F$ that covers $e$.
Let $S=\{v \in P:H \mbox{ has an } sv\mbox{-path}\}$. Let $\hat{P}$ be the set of edges in $P$ uncovered by $F$.
We need to show that $\hat{P} \neq \empt$ if and only if $t \notin S$.

Suppose that $t \notin S$. 
Among the nodes in $S$, let $u$ be the furthest from $s$ along $P$.
Let $v$ be the node in $P$ after $u$. Then $uv \in P$, since $u \neq t$. 
We claim that $uv \in \hat{P}$. Otherwise, there is $f \in F$ with $u,v \in T_f$ and we get that $v \in S$ (since $u \in S$), 
contradicting the choice of $u$.

Suppose that there is $e \in \hat{P}$. Let $T_s,T_t$ be the two trees of $T \sem \{e\}$, where $s \in T_s$ and  $t \in T_t$. 
Since no link in $F$ covers $e$, every link in $F$ has both ends either in $T_s$ or in $T_t$.
This implies that no node in $T_t$ belongs to $S$.
\qed
\end{proof}

Let us assume w.l.o.g. that our {\dt} instance consists of a tree $T$ on $V$ and 
an edge set $E$ on $V$ that contains no edge from $T$. 

\begin{definition} \label{d:con-dom}
Given a {\dt} instance $T,E$ the {\bf connectivity-domination graph} $\hat{G}=(\hat{V},\hat{E})$ 
has node set $\hat{V}=E \cup V$ and edge set $\hat{E}=I \cup D$ where (see Fig.~\ref{f:red}):
\begin{eqnarray*}
I  & = & \{fg: f,g \in E, V(T_f) \cap V(T_g) \neq \empt\} \\ 
D & =& \{ev: e \in E, v \in V, v \in T_e \mbox{ or } v \mbox{ is a neighbor of } T_e \mbox{ in } G\}
\end{eqnarray*}
\end{definition}

\begin{figure}
\centering 
\includegraphics{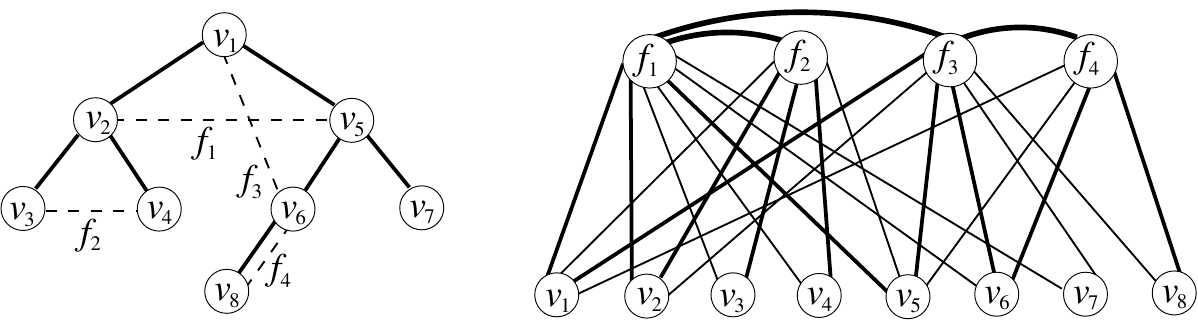}
\caption{Illustration to Definition~\ref{d:con-dom}. Here $E=\{f_1,f_2,f_3,f_4\}$ 
and $\{f_1,f_3\}$ is a unique optimal solution.
In the connectivity-domination graph on the right, the edges in $I$ are shown by bold arcs, 
and the edges in $D$ are shown by straight lines; 
the ``membership edges'' shown by bold lines encode that $v \in T_f$, 
while the other edges in $D$ shown by thin lines encode that $v$ is dominated but does not belong to $T_f$.}
\label{f:red}
\end{figure}

Note that an edge in $fg \in I$ encodes that $T_f$ and $T_g$ have a node in common, 
while an edge $ev \in D$ encodes that $v$ is dominated by $T_e$ 
(belongs to $T_e$ or is connected by an edge of $G$ to some node in $T_e$).
From Lemma~\ref{l:sv} we have:

\begin{corollary} \label{c:H}
$F \subs E$ is a feasible solution to a {\dt} instance if and only if 
in the connectivity-domination graph $\hat{G}$ the following holds:  \\
{\em (i)}  $\hat{G}[F]$ is connected; {\em (ii)} $F$ dominates $V$.
\end{corollary}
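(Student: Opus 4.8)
The plan is to show that the two conditions (i) and (ii) capture, respectively, the two properties that make $F$ feasible for {\dt}: that the forest $T_F$ is connected (a single tree) and that $T_F$ dominates $V$. Recall that $F \subs E$ is feasible for {\dt} exactly when $T_F$ is a dominating tree, namely $T_F$ is connected and every $v \in V$ either lies in $V(T_F)$ or has a $G$-neighbor in $V(T_F)$. Hence it suffices to prove the equivalences (i) $\Longleftrightarrow$ $T_F$ is connected and (ii) $\Longleftrightarrow$ $T_F$ dominates $V$; feasibility is then exactly (i) $\wedge$ (ii).

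I would dispatch (ii) first, by unwinding Definition~\ref{d:con-dom}. Since $\hat V = E \cup V$ is a disjoint union and $F \subs E$, the only $\hat G$-edges incident to a node $v \in V$ are the edges $fv \in D$ with $f \in E$. Thus $v$ is dominated by $F$ in $\hat G$ iff some $f \in F$ has $fv \in D$, which by the definition of $D$ means $v \in T_f$ or $v$ is a $G$-neighbor of $T_f$. Taking the union over $f \in F$, this says precisely that $v \in V(T_F)$ or $v$ has a $G$-neighbor in $V(T_F)$, i.e. that $v$ is dominated by $T_F$. So (ii) holds iff $T_F$ dominates $V$.

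The substance is (i). Note that $\hat G[F]$ has only $I$-edges (a $D$-edge joins an $E$-node to a $V$-node, so none lie inside $F$), hence $\hat G[F]$ is exactly the intersection graph of the subtrees $\{T_f : f \in F\}$, with $f,g$ adjacent iff $V(T_f) \cap V(T_g) \neq \empt$. The key claim is that the intersection graph of a family of subtrees of a tree is connected iff their union is connected; applied to $T_F = \cup_{f \in F} T_f$ this yields (i) $\Longleftrightarrow$ $T_F$ is connected. One direction is routine: if $\hat G[F]$ is connected, any $T_f, T_g$ are joined by a chain of pairwise-intersecting subtrees and so lie in one component of $T_F$. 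For the converse I would argue by contradiction: if $\hat G[F]$ split into a component $F_1$ and a nonempty complement $F_2$, then by the definition of $I$ no subtree indexed by $F_1$ shares a node with one indexed by $F_2$, so $A = V(\cup_{f \in F_1} T_f)$ and $B = V(\cup_{g \in F_2} T_g)$ are disjoint; as $T_F$ is connected on $A \cup B$ there is a tree edge of $T_F$ with one end in $A$ and one in $B$, but that edge is covered by some $h \in F$ and both its ends lie in $V(T_h)$, which is contained in $A$ or in $B$---a contradiction.

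This connected/tree equivalence for (i) is the only real obstacle; everything else is bookkeeping. As an alternative route---presumably the reason the statement is phrased as following from Lemma~\ref{l:sv}---one can pass through the bipartite incidence graph $H$ of that lemma: for $s,t \in V(T_F)$ a walk in $H$ alternates between edges of $F$ and nodes they cover, so $s$ and $t$ lie in one component of $T_F$ iff $H$ has an $st$-path, which by Lemma~\ref{l:sv} is iff $T \cup F$ has two edge-disjoint $st$-paths; connectivity of $\hat G[F]$ is then the projection of this incidence onto the $F$-side, and $T_F$ is a tree iff all of $V(T_F)$ lies in one component. Either way, combining the two equivalences gives feasibility $\Longleftrightarrow$ (i) $\wedge$ (ii).
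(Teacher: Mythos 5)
Your proof is correct, but it follows a different route than the paper. The paper gives no standalone proof of Corollary~\ref{c:H} at all: it is stated as an immediate consequence of Lemma~\ref{l:sv}, the Menger-based fact that $T \cup F$ has two edge-disjoint $st$-paths iff the bipartite incidence graph $H$ has an $st$-path, with the translation between components of $H$ and components of the intersection graph $\h{G}[F]$ left implicit. You instead prove the crucial equivalence --- $\h{G}[F]$ connected iff $T_F$ connected --- directly, via the general fact that a family of subtrees of a tree has connected union iff its intersection graph is connected. Your argument is sound in both directions: the chain-of-intersecting-subtrees direction is routine, and in the contradiction direction the decisive point (an edge of the connected forest $T_F$ crossing between the disjoint node sets $A$ and $B$ would lie in some $T_h$ whose node set is wholly inside $A$ or wholly inside $B$) is exactly right; the unwinding of the edge set $D$ for the domination part (ii) is also correct. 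What your approach buys is a self-contained, elementary proof that avoids Menger's theorem and Lemma~\ref{l:sv} entirely, making explicit precisely the step the paper glosses over; what the paper's approach buys is brevity, reusing machinery it has already established. Your closing sketch of the route through Lemma~\ref{l:sv} correctly identifies the paper's intent, but note that making it rigorous would require essentially the same bookkeeping you performed directly, namely relating connectivity of $\h{G}[F]$ to connectivity in $H$ and quantifying the two-edge-disjoint-paths condition over all pairs $s,t \in V(T_F)$.
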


Our goal is to give an $O(\log n)$ approximation algorithm for the problem of finding min-size $F \subs E$ as in Corollary~\ref{c:H}.
Note that in this problem $V, E$ are both subsets of nodes of $\hat{G}$.
This is a particular case of the following problem.


\begin{center} \fbox{\begin{minipage}{0.97\textwidth}
\underline{{\su} {\st} {\c} {\ds}} \\
{\em Input:} \ \ A graph $\h{G}=(\h{V},\h{E})$ and a partition $Q,R$ of $\h{V}$. \\
{\em Output:} A min-size $S \subs Q$ such that $\h{G}[S]$ is connected and $S$ dominates $R$.
\end{minipage}} \end{center}


In Section~\ref{s:hard} we observe that up to constants, the approximability  of this problem is the same as 
that of the {\sc Group Steiner Tree} problem, that admits ratio $O(\log^3 n)$ \cite{GKR}.
However, in some cases better ratios are possible. 
In the case $Q=\h{V}$ we get the (unweighted) {\st} {\c} {\ds} problem, that admits ratio $O(\log n)$ \cite{GK98}. 
We show ratio $O(\log n)$ when $\h{G}$ is the connectivity-domination graph, with $Q=E$ and $R=V$. 
In what follows, given a {\su} {\st} {\c} {\ds} instance, let $q$ be the least integer such that 
for every $v \in R$, any two neighbors of $v$ in $\h{G}$ are connected by a path in $\h{G}[Q]$ that has at most $q$ internal nodes.

\begin{lemma}
{\su} {\st} {\c} {\ds} admits approximation ratio $O(q \log n)$ if $Q,R$ partition $\h{V}$ and $R$ is an independent set in $G$.
\end{lemma}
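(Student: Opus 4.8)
The plan is to reduce the subset problem to its unrestricted counterpart — ordinary {\st} {\c} {\ds}, where the solution may use any node of $\hat{V}$ to dominate the terminal set $R$, and which admits ratio $O(\log n)$ by \cite{GK98} — and then to \emph{project} the resulting solution into $Q$ at the cost of an extra $O(q)$ factor. The relaxation is justified as follows: if $S^*$ is an optimal subset solution, then $S^* \subs Q \subs \hat{V}$ is connected and dominates $R$, hence it is also feasible for the unrestricted problem, so the unrestricted optimum is at most $|S^*|$. Running the $O(\log n)$-approximation therefore yields a connected set $S' \subs \hat{V}$ that dominates $R$ with $|S'| = O(\log n)\,|S^*|$. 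The only defect of $S'$ is that it may contain nodes of $R$, which are forbidden in a subset solution.

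For the fix-up step I would first pass to a spanning tree $\mathcal{T}$ of $\hat{G}[S']$ and then delete all nodes of $R \cap S'$ from it. Because $R$ is independent in $\hat{G}$, every tree-neighbor of a node $r \in R \cap S'$ lies in $Q$; call them $b_1,\dots,b_d$, where $d = \deg_{\mathcal{T}}(r)$. For each such $r$ and each $i = 2,\dots,d$, add to the solution the internal nodes of a path in $\hat{G}[Q]$ joining $b_1$ and $b_i$ with at most $q$ internal nodes; such a path exists by the definition of $q$, since $b_1$ and $b_i$ are both neighbors of $r \in R$. Let $S$ consist of $S' \cap Q$ together with all internal nodes added this way; since every added node lies in $Q$, we have $S \subs Q$.

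To see that $\hat{G}[S]$ is connected, contract each connected component of the forest $\mathcal{T} - (R \cap S')$ to a single vertex; then $\mathcal{T}$ becomes a tree whose vertices alternate between these $Q$-components and the nodes $r \in R \cap S'$, each $r$ being adjacent precisely to the components containing its neighbors $b_1,\dots,b_d$. Reconnecting $b_1$ to $b_2,\dots,b_d$ merges exactly the components incident to $r$, so doing this for every $r$ is equivalent to contracting each $r$ together with its incident edges, which preserves connectivity of the contracted tree. Domination is also preserved: as $\hat{G}[S']$ is connected and $R$ is independent, every $r \in R$ dominated by $S'$ has a neighbor in $S' \cap Q \subs S$, so $S$ still dominates $R$.

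Finally, for the cost bound, each $r$ contributes $d-1 = \deg_{\mathcal{T}}(r) - 1$ paths, each adding at most $q$ nodes, so the total number of added nodes is at most $q \sum_{r \in R \cap S'} \deg_{\mathcal{T}}(r) \le q \sum_{v} \deg_{\mathcal{T}}(v) = 2q(|S'|-1) < 2q|S'|$. Hence $|S| \le |S'| + 2q|S'| = (2q+1)|S'| = O(q \log n)\,|S^*|$, which proves the lemma. I expect the main obstacle to be verifying that deleting all of $R \cap S'$ simultaneously and reconnecting through the bounded-length $Q$-paths really restores global connectivity — this is exactly what the tree-contraction argument resolves — together with the accompanying degree-sum accounting, whose efficiency relies crucially on first passing to the spanning tree $\mathcal{T}$ (so that $\sum_v \deg_{\mathcal{T}}(v)$ is linear in $|S'|$) rather than working with the possibly dense graph $\hat{G}[S']$.
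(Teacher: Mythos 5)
Your proposal is correct and follows essentially the same route as the paper: relax to the unrestricted {\st} {\c} {\ds} instance, apply the $O(\log n)$-approximation of \cite{GK98}, and then splice out each solution node in $R$ by connecting its ($Q$-only, by independence) tree-neighbors via paths with at most $q$ internal nodes in $\h{G}[Q]$. The only differences are cosmetic: you use a star on the neighbors where the paper allows an arbitrary tree, and your handshake-lemma accounting gives a $(2q+1)$ factor where the paper exploits independence of $R\cap S'$ in the tree to get $(q+1)$ --- both yield $O(q\log n)$.
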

\begin{proof}
We find an $O(\log n)$ approximate solution $S$ for the {\st} {\c} {\ds} instance (with $Q=\h{V}$) using the algorithm of \cite{GK98}.
If $S \subs Q$ then we are done. Else, let $T=(V_T,E_T)$ be a subtree of $\h{G}$ with node set $S$.
We may assume that $T$ has no leaf in $R$, otherwise such leaf can be removed from $S$ and from $T$.
Let $R_T=S \cap R$ and $Q_T=S \cap Q$.  
Since $R$ is an independent set in $\h{G}$, $Q_T$ dominates $R$, 
and the nodes in $R_T$ are used in $S$ just to connect between the nodes in $Q_T$.
Moreover, since $R_T$ is an independent set in $T$  
$$
\sum_{r \in R_T} (\deg_T(r)-1) \leq |E_T|-|R_T|=|S|-1-|R_T| =|Q_T|-1 \ .
$$
Let $r \in R_T$. Add a set of $\deg_T(r)-1$ dummy edges that form a tree on the neighbors of $r$ in $T$,
and then replace every dummy edge $uv$ by a path $P_{uv}$ in $\h{G}[Q]$ that has at most $q$ internal nodes.
Applying this on every $r \in R_T$ gives a connected graph in $\h{G}[Q]$ that contains the set $Q_T$ that dominates $R$,
and the number of nodes in this graph is at most
$$
|Q_T|+q \sum_{r \in R_T} (\deg_T(r)-1) \leq (q+1)(|Q _T|-1) \leq (q+1)|S| \ .
$$
Since $|S|$ is $O(\log n)$ times the optimum, the lemma follows.
\qed
\end{proof}

Note that in our case, when $\h{G}$ is the connectivity-domination graph, we have $Q=E$ and $R=V$. 
Then $Q,R$ partition $\h{V}$ and $R$ is an independent set in $G$, by the construction. 
The next lemma shows that $q$ is a small constant in our case. 

\begin{lemma}
$q\leq 2$ if $\h{G}$ is the connectivity-domination graph and $Q=E$, $R=V$.
\end{lemma}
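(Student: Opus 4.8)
The plan is to read $q$ directly off the structure of $\hat{G}$ described in Definition~\ref{d:con-dom}. Fix $v \in V = R$. Since every edge of $I$ joins two elements of $E$ and every edge of $D$ joins an element of $E$ to an element of $V$, while $\hat{G}$ has no edge between two nodes of $V$, all neighbors of $v$ in $\hat{G}$ lie in $E = Q$. Hence it suffices to show that any two edges $e,e' \in E$ with $ev, e'v \in D$ are joined in $\hat{G}[E]$ by a path with at most two internal nodes. First I would unfold the defining condition of a $D$-edge: $ev \in D$ yields a node $w \in V(T_e)$ with $w = v$ or $w$ a neighbor of $v$ in the input graph, and likewise $e'v \in D$ yields $w' \in V(T_{e'})$ with $w' = v$ or $w'$ a neighbor of $v$.

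The main idea is to use, as internal nodes, non-tree edges incident to $v$ that bridge $T_e$ and $T_{e'}$ through the witnesses $w,w'$. If $v \in V(T_e)$ and $v \in V(T_{e'})$, then $v \in V(T_e) \cap V(T_{e'})$, so $ee' \in I$ and no internal node is needed. If $v \in V(T_e)$ but $v$ reaches $T_{e'}$ through a non-tree edge $f = vw' \in E$, then $V(T_f) \supseteq \{v,w'\}$ meets $V(T_e)$ at $v$ and meets $V(T_{e'})$ at $w'$, so $e,f,e'$ is a path with one internal node. Finally, if both $e$ and $e'$ reach $v$ through non-tree edges $f = vw, f' = vw' \in E$, then $V(T_f) \cap V(T_{f'}) \ni v$, $V(T_f) \cap V(T_e) \ni w$, and $V(T_{f'}) \cap V(T_{e'}) \ni w'$, so $e,f,f',e'$ is a path with two internal nodes. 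In each of these cases at most two internal nodes are used, which is exactly what we want.

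The hard part will be the case in which a witness edge, say $vw$, is a tree edge rather than a member of $E$: then $vw$ is not a node of $\hat{G}[E]$ and cannot itself serve as a bridge. To handle it I would instead exhibit a genuine edge $f \in E$ whose tree path $T_f$ contains the tree edge $vw$ (equivalently, covers it); such an $f$ meets $T_e$ at $w$ and can then be combined exactly as above without exceeding two internal nodes. I expect this to be the crux: one must argue that the relevant tree edge is in fact covered — either by reducing to the membership case when $v$ itself lies on $T_e$, or by producing such a covering edge from the structure of the instance — and then verify that none of the combined cases pushes the internal-node count above two. The bookkeeping in the remaining cases is routine once Definition~\ref{d:con-dom} is spelled out, so essentially the whole weight of the proof rests on dispatching this tree-edge witness configuration.
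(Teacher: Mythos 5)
Your three ``easy'' cases are correct and correspond exactly to the type-1 analysis in the paper's own proof: whenever each neighbor $e$ of $v$ admits some $f_e \in E$ with $v \in T_{f_e}$ and $V(T_e) \cap V(T_{f_e}) \neq \emptyset$, the sequence $e, f_e, f_{e'}, e'$ is a path in $\hat{G}[Q]$ with at most two internal nodes, since $T_{f_e}$ and $T_{f_{e'}}$ share the node $v$. The gap lies in the case you correctly single out as the crux but then propose to dispatch in a way that cannot work: when the only witness for $ev \in D$ is a \emph{tree} edge $uv$ (with $u \in T_e$, $v \notin T_e$), you suggest ``exhibiting a genuine edge $f \in E$ whose tree path covers $uv$.'' Such an $f$ need not exist: it may be that no edge of $E$ has endpoints on both sides of $uv$, i.e., $uv$ is a bridge of $G$. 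In that situation no bridging node is available, and no local argument can succeed at all, because the tree path of every edge of $E$ then lies entirely on one side of $uv$ or entirely on the other, so a chain of pairwise-intersecting tree paths can never cross from one side to the other.

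The paper closes this case by a global argument that uses \emph{feasibility} of the {\sc Dominating Subtree} instance, not by producing a covering edge. If $uv$ is uncovered then it is a bridge of $G$; writing $T^u, T^v$ for the two components of $T$ minus $uv$, feasibility forces $T^v=\{v\}$, since $T_F$ can never contain the bridge $uv$ and hence could not be a dominating tree if both sides had two or more nodes (and $T_e \subseteq T^u$ already has at least two nodes). Then $u$ is the unique neighbor of $v$ in $G$, so every neighbor $e'$ of $v$ in $\hat{G}$ must satisfy $u \in T_{e'}$; consequently all these neighbors pairwise intersect at $u$, they form a clique, and the required paths have zero internal nodes. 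Without this step your case analysis is genuinely incomplete: one can construct (infeasible) instances in which two neighbors of $v$ lie in different connected components of $\hat{G}[Q]$, so no finite bound on $q$ can be derived from the structure of $\hat{G}$ alone. The missing idea is precisely that the uncovered-witness configuration, far from being repairable by a covering edge, collapses under the feasibility hypothesis to the pendant-node/clique situation.
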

\begin{proof}
Let $v \in R$ and let $Q_v$ be the set of neighbors of $v$ in $\h{G}$. 
For $e \in Q_v$ let $f_e$ be defined as follows:
\begin{itemize}
\item
If there exists some $f \in E$ (possibly $f=e$) such that $v \in T_f$ and $T_e,T_f$ have a node  in common,
then we say that {\bf $e$ is of type 1} and let $f_e=f$.
\item
If $f$ as above does not exist then we say that {\bf $e$ is of type 2} and let $f_e=e$.
\end{itemize}
For illustration, consider the example in Fig.~\ref{f:red}. 
\begin{itemize}
\item
Let $v=v_6$. Then $Q_v=\{f_1,f_3,f_4\}$. 
If $e=f_3$ or if $e=f_4$ then $v \in T_f$, hence $e$ is of type~1 and we may have $f_e=e$. 
If $e=f_1$ then $v \notin T_e$, but $e$ is still of type~1 since for $f=f_3$ we have $v \in T_f$ and $T_e,T_f$ have a node  in common.
\item
Let $v=v_7$. Then $Q_v=\{f_1,f_3\}$.  There is no $f \in E$ such that $v \in T_f$, hence both $f_1,f_3$ are of type~2.
\end{itemize}

Suppose that every $e \in Q_v$ is of type~1.
Then for every $e \in Q_v$ we have $f_e=e$ or $ef_e \in I$, and note that $v \in T_{f_e}$.
Thus for any $e_1,e_2 \in Q_v$, the sequence $e_1,f_{e_1},f_{e_2},e_2$ forms a path in $\h{G}[Q]$
with at most two internal nodes.

Suppose that there is $e \in Q_v$ of type~2.
Then $v \notin T_e$, and $v$ is dominated by $T_e$ via some edge $uv$ of $T$; otherwise $e$ is of type~1.
Let $T^u$ and $T^v$ be the two subtrees of $T \sem e$, where $u \in T^u$ and $v \in T^v$.
Note that no edge in $E$ connects $T^u$ and $T^v$;  otherwise $e$ is of type~1.
Hence $uv$ is a bridge of $G$. 
This implies $T^v$ consists of a single node $v$, as otherwise the instance has no feasible solution.
Consequently, every $e \in Q_v$ is of type~2 and $u \in T_e$ holds, hence $\h{G}[Q_v]$ is a clique, and the lemme follows.
\qed
\end{proof}

This concludes the proof of Theorem~\ref{t:main'}, and thus also the proof of Theorem~\ref{t:main}. 

\section{{\sc Connected Dominating Set} variants} \label{s:hard}

Here we make some observations about the approximability of several variants of the 
{\sc Connected Dominating Set} ({\sc CDS}) problem.
In all these variants we are given a graph $G=(V,E)$ and possibly edge-costs/node-weights, 
and seek a minimum cost/weight/size subtree $H=(V_H,E_H)$ of $G$ that satisfies a certain domination property. 
Recall that in the {\sc CDS} problem $V_H$ should dominate $V$. 
The additional variants we consider are as follows.

\medskip

\noindent \underline{\sc Steiner CDS}: 
$V_H$ dominates a given set of terminals $R \subs V$. \\

\noindent \underline{\sc Subset Steiner CDS}:
$V_H$ dominates $R$ and $V_H \subs Q$ for a partition $Q,R$ of $V$. \\

\noindent \underline{\sc Partial CDS}: 
$V_H$ dominates at least $k$ nodes. \\

We relate these problems to the {\sc Group Steiner Tree} ({\sc GST}) problem: 
given a graph $G=(V,E)$ and a collection ${\cal S}$ of groups (subsets) of $V$,
find a minimum edge-cost/node-weight/size subtree $H$ of $G$ that contains at least one node from every group.
When the input graph is a tree and there are $k$ groups, edge-costs {\sc GST} admits ratio $O(\log n \log k)$ \cite{GKR}, 
and this is essentially tight \cite{HK}.
For general graphs the edge-costs version admits ratio $O(\log^2 n \log k)$, 
using the result of \cite{GKR} for tree inputs and the \cite{FRT} probabilistic tree embedding. 
However, the best ratio known for the node-weighted {\sc GST} is the one that is derived from the more general 
{\sc Directed Steiner Tree} problem \cite{CCCD,Z-DST,HRZ,KP-DST} with $k$ terminals -- 
for any integer $1 \leq \ell \leq k$, ratio $O\left(\ell^3 k^{2/\ell}\right)$ in time 
$O\left(k^{2\ell}n^\ell\right)$.
  
As was observed in \cite{KPS}, several {\sc CDS} variants are particular cases of the corresponding {\sc GST} variants,
where for every relevant node $r$ we have a group $S_r$ of nodes that dominate $r$ in the input graph. 
Specifically, we have the following.

\begin{lemma} \label{l:SS<GST}
For edge-costs/node-weights, ratio $\al(n,k)$ for {\sc GST} with $n$ nodes and $k$ groups 
implies ratio $\al(|Q|,|R|)$ for {\sc Subset Steiner CDS},
and this is so also for the unit node weights versions of the problems.
\end{lemma}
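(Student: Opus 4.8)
The plan is to realize a \su{} \st{} {\sc CDS} instance as a \st{} {\sc GST} instance living on the subgraph induced by $Q$, exactly along the lines observed in \cite{KPS}. Given a \su{} \st{} {\sc CDS} instance consisting of a graph $G=(V,E)$, a partition $Q,R$ of $V$, and edge-costs/node-weights, I would build the {\sc GST} instance on the induced subgraph $G[Q]$, inheriting the edge-costs/node-weights of $G$ on $Q$, and for every terminal $r \in R$ introduce a group $S_r$ consisting of those neighbors of $r$ in $G$ that lie in $Q$. The collection of groups is $\{S_r : r \in R\}$. Note that $G[Q]$ has $|Q|$ nodes and there are $|R|$ groups, which is precisely the pair of parameters appearing in the claimed ratio $\al(|Q|,|R|)$.

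The key step is a cost-preserving equivalence of feasible solutions. Since $Q$ and $R$ partition $V$, no terminal of $R$ lies in $Q$, so a subtree $H$ with $V_H \subs Q$ can never dominate $r \in R$ by containing $r$; it dominates $r$ if and only if $V_H$ contains a neighbor of $r$ in $G$, i.e. iff $V_H \cap S_r \neq \empt$. Hence a subtree $H$ of $G[Q]$ is feasible for \su{} \st{} {\sc CDS} (connected, contained in $Q$, dominating $R$) exactly when it is feasible for {\sc GST} (connected and meeting every group $S_r$). Because both problems are evaluated over the same graph $G[Q]$ with the same edge-costs/node-weights, the objective value of any such $H$ is identical in the two problems, so the two optimum values coincide.

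It then follows that running an $\al(|Q|,|R|)$-approximation for {\sc GST} on the constructed instance yields a subtree of value at most $\al(|Q|,|R|)$ times the {\sc GST} optimum, which equals the \su{} \st{} {\sc CDS} optimum; by the equivalence this subtree is itself a feasible \su{} \st{} {\sc CDS} solution, giving the stated ratio. The unit node weights (size) versions go through verbatim by setting all weights to one, since the reduction preserves the node set, and hence the size, of the subtree exactly. I do not expect a real obstacle here: the only points needing care are the domination bookkeeping forced by the partition (a terminal in $R$ is covered only through a neighbor in $Q$, never by itself) and the routine remark that a feasible connected subgraph can always be thinned to a subtree without increasing the cost, so restricting both problems to subtrees loses nothing.
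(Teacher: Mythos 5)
Your proposal is correct and matches the paper's proof essentially verbatim: both construct the {\sc GST} instance on $G[Q]$ with a group $S_r$ (the neighbors of $r$ lying in $Q$) for each $r \in R$, and run the $\al$-approximation on this instance with $|Q|$ nodes and $|R|$ groups. Your write-up merely spells out the feasibility/cost-preservation details that the paper leaves implicit.
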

\begin{proof}
Given a {\sc Subset Steiner CDS} instance $G=(V,E)$ with edge costs/node weights and $Q,R \subs V$, 
construct a {\sc GST} instance by introducing for every $r \in R$ a group $S_r$ of nodes in $Q$ that dominate $r$.
In all cases, we return an $\al$-approximation for the {\sc GST} instance on $G[Q]$, that has $|Q|$ nodes and $|R|$ groups.
\qed
\end{proof}

Earlier, Guha and Khuller \cite{GK98} showed that the inverse is also true for edge-costs {\sc CDS} and node-weighted 
{\sc Steiner CDS}; in \cite{GK98} the reduction was to the {\sc Set TSP} problem, that can be shown to have 
the same approximability as GST, up to a factor of 2. 
Note that already the edge-costs {\sc CDS} is {\sc GST} hard,
hence our ratio $\tilde{O}(\log^2 n)$ for unit edge costs {\sc $2$-Edge-Connected Dominating Set} 
is unlikely to be extended to arbitrary costs.

We now show that {\sc Subset Steiner CDS} with unit edge costs/node weights
is hard to approximate as {\sc GST} with general edge costs. 
In what follows, we will assume that ratio $\al(n)$ for a given problem is an increasing function of $n=|V|$.

\begin{theorem} \label{t:S>GST}
For any constant $\eps>0$, ratio $\al(n)$ for {\sc Subset Steiner CDS} with unit edge costs/node weights
implies ratio $\al(|E|(n+k)/\eps)+\eps$ for {\sc GST} with arbitrary edge costs.
\end{theorem}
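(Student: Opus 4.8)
The plan is to reduce an arbitrary-cost {\sc GST} instance to a unit-cost {\sc Subset Steiner CDS} instance by encoding edge costs as lengths of unit-cost subdivision paths, and by encoding each group as a domination constraint. Given a {\sc GST} instance $(G=(V,E),c,\{S_1,\dots,S_k\})$ with $n=|V|$, I first dispose of the degenerate case $\opt=0$ (all groups meetable by a connected zero-cost subgraph), which is solvable directly. Otherwise I guess $\hat\opt$ from a geometric sequence of polynomially many values, so that for some guess $\opt\le\hat\opt<(1+\eps)\opt$; I run the whole reduction for every guess and return the cheapest resulting solution, so it suffices to analyse the correct guess.

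For a fixed $\hat\opt$ I set the granularity $\delta=\eps\hat\opt/n$ and preprocess the instance. First I delete every edge with $c(e)>\hat\opt$: since any optimal tree has cost $\opt\le\hat\opt$, it uses no such edge, so the optimum is unchanged. Next I round each remaining cost \emph{down} to a multiple of $\delta$, setting $c'(e)=\delta\lfloor c(e)/\delta\rfloor$, and subdivide $e$ into a path of $\lfloor c(e)/\delta\rfloor$ unit-cost edges, contracting the at most $n$ edges for which this count is $0$. The resulting subdivided graph $\tilde G$ represents {\sc GST} under the costs $c'$, scaled by $\delta$. Finally I build the {\sc Subset Steiner CDS} instance $\h G$ by letting $Q$ be the node set of $\tilde G$ (original plus subdivision nodes) and adding, for each group $S_i$, a terminal $r_i$ placed in $R$ and joined to every node of $S_i$.

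The correspondence is then immediate. A set $S\subs Q$ with $\h G[S]$ connected is exactly a connected subgraph of $\tilde G$, and $S$ dominates $r_i$ iff $S\cap S_i\neq\empt$, since the only neighbours of $r_i$ are the nodes of $S_i$ and $S\subs Q$ excludes the terminals. Thus feasible {\sc Subset Steiner CDS} solutions are in size-preserving bijection with {\sc GST} trees under $c'$, the size of $S$ equalling $c'(\cdot)/\delta$ up to an additive $1$. Counting nodes, the subdivisions contribute at most $\sum_e c(e)/\delta\le |E|\hat\opt/\delta=|E|n/\eps$ nodes, so $\h G$ has $O(|E|(n+k)/\eps)$ nodes and, by monotonicity, the ratio $\al$ applies at this size. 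Running the assumed $\al$-approximation on $\h G$ yields a tree $H'$ with $|E(H')|\le \al\cdot(\opt'/\delta)$, where $\opt'$ is the {\sc GST} optimum under $c'$; un-subdividing $H'$ gives $H\subs G$ with $c'(H)=\delta\,|E(H')|\le\al\,\opt'$.

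The heart of the analysis, and the step I expect to be the main obstacle, is to keep the rounding error \emph{additive} rather than let it be multiplied by the unknown factor $\al$. Rounding down is precisely what achieves this: since $c'\le c$ we have $\opt'\le\opt$, while converting back costs at most $c(H)<c'(H)+\sum_{e\in H}\delta\le c'(H)+n\delta$, because $H$ is a tree with at most $n-1$ edges. Combining, $c(H)<\al\,\opt'+n\delta\le\al\,\opt+\eps\hat\opt\le(\al+O(\eps))\opt$ for the correct guess, and rescaling $\eps$ by a constant yields the stated bound $\al(|E|(n+k)/\eps)+\eps$. Had I instead rounded up, the slack $\opt'\le\opt+n\delta$ would sit inside the factor $\al$ and give only $\al(1+\eps)$; the decisions to round down and to tie $\delta$ to $\hat\opt/n$ are therefore the crucial points.
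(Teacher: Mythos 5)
Your proposal is correct and is essentially the paper's own argument, just composed into a single reduction: the paper proves the theorem via two lemmas --- one attaching a dominated terminal $r_S$ per group to pass from unit-cost {\sc GST} to unit-weight {\sc Subset Steiner CDS}, the other guessing the maximum edge cost $M$ of an optimal tree, rounding costs down with granularity $\epsilon M/n$, and subdividing edges into unit-cost paths --- and your construction uses exactly these ingredients, with the immaterial variation of guessing the optimum from a geometric sequence rather than guessing $M$. The one slip is the bound $|E(H')|\le\alpha\cdot\mathsf{opt}'/\delta$, which drops the node-versus-edge $\pm1$ of a tree (the correct bound is $|E(H')|\le\alpha(\mathsf{opt}'/\delta+1)-1$, i.e., an extra additive $(\alpha-1)\delta$ that is harmless only when $\alpha=O(n)$); the paper flags exactly this $\pm1$ discrepancy and notes it can be removed by guessing a node/edge of an optimal solution, a fix that applies verbatim to your argument.
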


Theorem~\ref{t:S>GST} is proved in the next two lemmas.
Note that combined with Lemmas \ref{l:SS<GST}, Theorem~\ref{t:S>GST} implies
that the approximability of  
{\sc Subset Steiner CDS} with unit edge costs/node weights 
is essentially the same as that of {\sc GST} with arbitrary edge costs, up to a constant factor.
Recall that we showed that particular instances of {\sc Subset Steiner CDS} with unit node weights admit ratio $O(\log n)$.
Theorem~\ref{t:S>GST} implies that we could not achieve this for general unit node weights instances.

The next lemma shows that for unit edge costs/node weights, 
{\sc Subset Steiner CDS} is not much easier than {\sc GST}.

\begin{lemma}
For unit edge costs/node weights, ratio $\al(n)$ for {\sc Subset Steiner CDS}  
implies ratio $\al(n+k)$ for {\sc GST} with $k$ groups.
\end{lemma}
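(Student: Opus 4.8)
The plan is to reduce a {\sc GST} instance with unit edge costs/node weights directly to a {\sc Subset Steiner CDS} instance by encoding each group as a node that must be dominated. First I would take a {\sc GST} instance consisting of a graph $G=(V,E)$ and a collection ${\cal S}=\{S_1,\dots,S_k\}$ of groups. The idea is to build a {\sc Subset Steiner CDS} instance $\h{G}=(\h{V},\h{E})$ whose tree-part $Q$ is (a copy of) $V$, together with one new \emph{group node} $r_i$ for each group $S_i$, placed into the set $R$ to be dominated. I would connect $r_i$ by an edge to every node of $S_i$, and keep all original edges of $G$ inside $\h{G}[Q]$. Since the $r_i$ are the only nodes in $R$ and each is adjacent only to its group, a subtree $H$ with $V_H \subs Q$ dominates $r_i$ if and only if $V_H$ contains a node of $S_i$; thus $V_H$ dominates all of $R$ iff the corresponding subtree of $G$ hits every group.

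The key verification steps are the two directions of the correspondence. In the forward direction, given a feasible {\sc GST} solution $H$ (a connected subtree of $G$ hitting every group), the same node set $V_H \subs Q$ is connected in $\h{G}[Q]$ and dominates every $r_i$, so it is feasible for {\sc Subset Steiner CDS}; its size is unchanged. In the reverse direction, given a feasible {\sc Subset Steiner CDS} solution $S \subs Q$, I would argue that $S$ induces a connected subgraph of $G$ (the edges inside $Q$ are exactly the original edges of $G$) and, by the independence/adjacency argument above, hits every group, so it is a feasible {\sc GST} solution of the same size. Because both the number of nodes and the objective are preserved up to the added group nodes, an $\al$-approximation on the constructed instance yields an $\al$-approximation for {\sc GST}, and the new instance has $n+k$ nodes (the $n$ original nodes in $Q$ plus the $k$ group nodes in $R$), giving ratio $\al(n+k)$ as claimed. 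The same construction works simultaneously for unit edge costs and for unit node weights, since in both measures each group node contributes nothing to a solution that lies in $Q$.

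I expect the main obstacle to be the subtle point that a dominating node in $S$ need not itself \emph{belong} to $Q$ unless we enforce $V_H \subs Q$; here this is handled for free because $R=\{r_1,\dots,r_k\}$ is an independent set whose only neighbors are the group members, so no $r_i$ can serve to connect the solution, and the requirement $S\subs Q$ in the problem definition already forbids using group nodes in the tree. The one place requiring care is ensuring that removing the group nodes does not disconnect $S$ in $G$: since $R$ is independent and $S\subs Q$, no edge of $\h{G}[S]$ has an endpoint in $R$, so connectivity of $\h{G}[S]$ is literally connectivity of $S$ in $G$. With that observation the reduction is exact and the ratio transfers without loss, which is why the statement records the clean bound $\al(n+k)$ rather than anything larger.
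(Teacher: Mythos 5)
Your proposal is correct and takes essentially the same approach as the paper: the paper's proof uses the identical construction (add a new node $r_S$ for each group $S$, adjacent to all nodes of $S$, with $Q=V$ and $R$ the set of group nodes) and verifies the same two-way feasibility correspondence. The only cosmetic difference is that the paper first reduces the unit-edge-cost case to the unit-node-weight case (noting that inclusion-minimal solutions are trees, so the two objectives differ by $\pm 1$, which is absorbed by guessing a node of an optimal solution), whereas you argue that both measures transfer exactly through the same construction; both treatments are adequate.
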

\begin{proof}
For each one of the problems in the lemma, any inclusion minimal solution is a tree,
hence the unit edge costs case is equivalent to the unit node weights case;
this is so up to an additive $\pm 1$ term, which can be avoided by guessing an edge/node that belongs to some optimal solution.
So we will consider just the unit node weights case. 
Given a unit weight {\sc GST} instance $G=(V,E),{\cal S}$ construct a unit weight 
{\sc Subset Steiner CDS} instance $G'=(V',E'),Q,R$ as follows.
For each group $S \in {\cal S}$ add a new node $r_S$ connected to all nodes in $S$.
The set of nodes that should be dominated is $R=\{r_S: S \in {\cal S}\}$, and $Q=V$.
Any subtree $H$ of $G'[Q]$ is is also a subtree of $G$, 
and for any group $S \in {\cal S}$, $H$ contains a node from $S$ if and only if 
$H$ dominates $r_S$. Thus $H$ is a feasible {\sc GST} solution if and only if $H$ is a feasible {\sc Subset Steiner CDS} solution. 
\qed
\end{proof}

The next lemma shows that {\sc GST} with unit edge costs is not much easier than 
{\sc GST} with arbitrary edge costs.

\begin{lemma} \label{l:GST}
If {\sc GST} with unit edge costs admits ratio $\al(n)$  
then for any constant $\eps>0$ {\sc GST} (with arbitrary edge costs) admits ratio $\al(n|E|/\eps)+\eps$.
\end{lemma}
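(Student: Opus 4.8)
The plan is to reduce an arbitrary edge-cost {\sc GST} instance to a unit edge-cost one by rounding the edge costs to integers and subdividing each edge into a path of unit-cost edges, so that the assumed unit edge-cost algorithm can be applied and its output translated back to the original instance. First I would guess the largest edge cost $c_{\max}$ used by some fixed optimal tree $H^*$; since only $|E|$ distinct costs occur, I can try all choices and return the cheapest solution produced, so I may assume $c_{\max}$ is known. As $H^*$ uses no edge costlier than $c_{\max}$, I discard all such edges; every surviving edge then satisfies $c(e)\le c_{\max}\le\opt$, where $\opt$ is the optimal cost (and $\opt\ge c_{\max}$ since $H^*$ contains an edge of cost $c_{\max}$).

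Next, fix a scaling unit $\delta>0$, set integer costs $c'(e)=\lceil c(e)/\delta\rceil$, and build a unit edge-cost instance $G'$ by replacing each edge $e=uv$ with a path of $c'(e)$ unit-cost edges through $c'(e)-1$ fresh internal nodes; the groups of $G'$ are exactly those of $G$, so no internal node lies in any group. Because every surviving edge has $c(e)\le c_{\max}$, we have $c'(e)\le c_{\max}/\delta+1$, so the number of nodes of $G'$ is $N\le n+|E|(c_{\max}/\delta+1)$. Taking $\delta$ of order $\eps\,c_{\max}/n$ yields $N=O(n|E|/\eps)$, as in the target bound $\al(N)$.

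The correspondence between the two instances is the heart of the argument. Any inclusion-minimal subtree solution of $G'$ may be assumed to use each subdivided path either entirely or not at all: an internal node has degree two and lies in no group, so a partially used path could only end at such a node, which would then be a prunable leaf. Hence minimal $G'$-solutions are in bijection with $G$-subtrees $H$, the unit size of the $G'$-solution equals $c'(H)=\sum_{e\in H}c'(e)$, and $\delta\,c'(H)\in[c(H),\,c(H)+(n-1)\delta)$ for every subtree $H$ on at most $n-1$ edges. In particular the optimal unit size $\opt'$ of $G'$ satisfies $\delta\,\opt'\le\delta\,c'(H^*)<\opt+(n-1)\delta\le\opt+\eps\,c_{\max}$.

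Finally I would run the assumed $\al(N)$-approximation on $G'$, obtaining a solution of unit size at most $\al(N)\,\opt'$, and project it back to a feasible {\sc GST} solution $\tilde H$ in $G$ whose cost obeys $c(\tilde H)\le\delta\,c'(\tilde H)\le\delta\,\al(N)\,\opt'$. Combining with the bound on $\delta\,\opt'$ gives $c(\tilde H)\le\al(N)(\opt+\eps\,c_{\max})$; choosing $\delta$ smaller by a factor $\al(N)$ (which changes $N$, and hence $\al(N)$, by only a polylogarithmic amount) drives the rounding slack below $(\eps/\al(N))\opt$ and delivers the stated ratio $\al(N)+\eps$ with $N=O(n|E|/\eps)$. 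The main obstacle is precisely this tension in the choice of $\delta$: it must be small enough that the additive rounding error, once amplified by the multiplicative factor $\al(N)$, stays within $\eps\cdot\opt$, yet large enough that subdivision blows the node count up to only $O(n|E|/\eps)$; verifying the full-path structure of minimal $G'$-solutions is the other point that requires care.
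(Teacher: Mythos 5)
There is a genuine gap, and it lies in the direction of your rounding. You round \emph{up} ($c'(e)=\lceil c(e)/\delta\rceil$), so $\delta c'$ overestimates $c$, and the discretization error sits on the optimum side: $\delta\,\opt' \leq \opt + (n-1)\delta$. When you then apply the approximation algorithm, this additive error gets \emph{multiplied} by $\al(N)$: your own computation gives $c(\t{H}) \leq \al(N)\left(\opt + (n-1)\delta\right)$. To absorb the term $\al(N)\,(n-1)\delta$ into $\eps\cdot\opt$ you are forced to take $\delta$ of order $\eps\, c_{\max}/(n\,\al(N))$ --- but $N$ itself depends on $\delta$, so this is circular, and even granting a fixed point the subdivided instance has $N = O(n|E|\al(N)/\eps)$ nodes, so what you prove is ratio $\al\bigl(n|E|\al/\eps\bigr)+\eps$, strictly weaker than the claimed $\al(n|E|/\eps)+\eps$. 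Your escape hatch --- that shrinking $\delta$ by a factor $\al(N)$ ``changes $\al(N)$ by only a polylogarithmic amount'' --- silently assumes $\al$ is polylogarithmic. The lemma is stated for an arbitrary increasing ratio function, and in this paper's context that matters: the known ratios for node-weighted {\sc GST} are polynomial in $k$, not polylogarithmic.

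The paper avoids all of this by rounding \emph{down}: $c'(e)=\lfloor c(e)/\mu\rfloor$ with $\mu=\eps M/n$, where $M=\max_{e\in H^*}c(e)$ is guessed as you do. Then $\mu c'(H^*)\leq c(H^*)$, so the multiplicative factor $\al$ lands on a quantity that is at most $\opt$; the additive error $\mu|H|\leq \eps M\leq \eps\,\opt$ appears only once, when converting the \emph{returned} solution back to the original costs, and is never amplified by $\al$:
$$
c(H) \;\leq\; \mu c'(H)+\eps M \;\leq\; \mu\,\al\, c'(H^*)+\eps M \;\leq\; \al\, c(H^*)+\eps\, c(H^*) \ .
$$
This makes $\mu$ independent of $\al$, gives $c'(e)\leq n/\eps$ and hence at most $|E|n/\eps$ nodes after subdivision, exactly as the lemma states. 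The one extra step floor rounding requires (which your ceiling rounding indeed avoids) is contracting zero-cost edges before subdividing; that is cheap. The rest of your argument --- guessing $M$, deleting heavier edges, placing groups only on original nodes, and the observation that minimal solutions in the subdivided graph use each subdivision path entirely or not at all --- matches the paper and is fine.
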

\begin{proof}
Let $G=(V,E),c$ be a {\sc GST} instance (with arbitrary edge costs).
Fix some optimal solution $H^*$. Let $M=\max_{e\in H^*} c(e)$ be the maximum cost of an edge in $H^*$.
Note that $c(H^*) \geq M$. 
Since there are $O(n^2)$ edges, we can guess $M$, and remove from $G$ all edges of cost greater than $M$.

Let $\mu=\frac{\eps M}{n}$. Define new costs by $c'(e)=\lfloor\frac{c(e)}{\mu}\rfloor$. 
Note that $\mu c'(e)\leq c(e)\leq\mu(c'(e)+1)$ for all $e$, thus for any edge set $J$ with $|J| \leq n$ 
$$
c(J) = \sum_{e \in J} c(e) \leq  \sum_{e \in J} \mu(c'(e)+1)=\mu c'(J)+\mu|J| \leq \mu c'(J)+\eps M \ .
$$ 
This implies that if $H$ is an $\al$-approximate solution w.r.t. the new costs $c'$ then
$$
c(H) \leq \mu c'(H)+\eps M \leq \mu \al c'(H^*)+\eps M \leq \al c(H^*) +\eps c(H^*)=(\al+\eps)c(H^*) \ .
$$
So ratio $\al$ w.r.t. costs $c'$ implies ratio $(\al+\eps)$ w.r.t. the original costs $c$.

The instance with costs $c'$ can be transformed into an equivalent instance with unit edge costs 
and at most $n|E|/\eps$ nodes by a folklore reduction that replaces every edge by a path of length 
equal to the cost of the edge. Note that $c'$ is integer valued and that 
$$
c'(e) \leq \frac{c(e)}{\mu} = \frac{c(e)\cdot n}{\epsilon M} \leq \f{n}{\eps}  \ .
$$
Thus the number of nodes in the obtained unit edge costs instance is bounded by $n+|E|(n/\eps-1) \leq |E|n/\eps$.

Given the {\sc GST} instance with costs $c'$, 
we contract every zero cost edge while updating the groups accordingly.
Then we replace every edge $e=uv$ by a $uv$-path of length $c'(e)$, thus obtaining an equivalent 
{\sc GST} instance with unit edge costs and at most $|E| n/\eps$ nodes. 
\qed
\end{proof}

Theorem~\ref{t:S>GST} follows from the last two lemmas.

Finally, we consider the {\sc Partial CDS} problem. 
For unit node weights the problem was shown to admit a logarithmic ratio in \cite{KPS}.
We show that in the case of arbitrary weights, the problem 
is not much easier than {\sc Subset Steiner CDS}, and thus also is not easier than {\sc GST}.

\begin{lemma} \label{l:partial}
For edge costs/node weights, ratio $\al(n)$ for {\sc Partial CDS} 
implies ratio $\al(n^2)$ for {\sc Subset Steiner CDS}.
\end{lemma}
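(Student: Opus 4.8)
The plan is to reduce {\sc Subset Steiner CDS} to {\sc Partial CDS} by a blow-up that replaces each domination target by many copies that are \emph{cheap to dominate but prohibitively expensive to include}, so that the cardinality threshold $k$ of {\sc Partial CDS} is forced to encode domination of \emph{all} of $R$. Let $G=(V,E)$ with partition $Q,R$ and $n=|V|$ be a {\sc Subset Steiner CDS} instance, and set $N=n+1$. I would build a {\sc Partial CDS} instance $G'=(V',E')$ whose node set is $Q$ together with $N$ new copies $d_r^1,\dots,d_r^N$ for each $r\in R$; I keep all edges of $G$ inside $Q$ with their original costs/weights, join each copy $d_r^i$ to every node of $N_G(r)\cap Q$, leave the copies pairwise non-adjacent, and make the copies very expensive (weight $\Gamma$ on each copy in the node-weighted case, or cost $\Gamma$ on each copy-incident edge in the edge-cost case). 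The threshold is set to $k=|R|\cdot N$.

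The first point to establish is that expensive copies force every cheap tree to live in $Q$ and to dominate targets only through $Q$-neighbors. Let $W$ be the total cost/weight of $G[Q]$; every feasible {\sc Subset Steiner CDS} solution costs at most $W$, and is also a feasible {\sc Partial CDS} solution in $G'$ (a $Q$-tree dominating all of $R$ dominates all $|R|N=k$ copies), so $\opt(\text{Partial})\le\opt(\text{Subset})\le W$. Choosing $\Gamma>\al(n^2)\cdot W$ then guarantees that the $\al$-approximate {\sc Partial CDS} solution $H'$ has cost below $\Gamma$ and hence uses no copy, i.e. $V_{H'}\subs Q$. Such an $H'$ dominates a copy $d_r^i$ exactly when $V_{H'}$ meets $N_G(r)\cap Q$, which is precisely the condition that $H'$ dominates $r$ in $G$; so the feasible $Q$-trees dominating $\ge k$ nodes in $G'$ are exactly the feasible {\sc Subset Steiner CDS} solutions.

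The crux is the counting argument that makes $k$ encode ``dominate all of $R$''. A tree $S\subs Q$ dominates all $N$ copies of $r$ if it dominates $r$ and none of them otherwise, while it dominates at most $|Q|\le n$ of the remaining ($Q$-)nodes. Hence if $S$ dominates all of $R$ it dominates at least $|R|N=k$ nodes, whereas if $S$ misses even one $r\in R$ it dominates at most $(|R|-1)N+n<|R|N=k$ nodes, the strict inequality holding because $N=n+1>n$. This is the step I expect to be the main obstacle: the vertices of the tree and their incidental neighbors contribute an \emph{uncontrolled} amount to the dominated count, and the entire reduction hinges on swamping that contribution by taking the number of copies strictly larger than the number of ``free'' nodes $n$.

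To finish I would bound the instance size and read off the ratio. With $|Q|\ge1$ and $|R|\le n-1$ (otherwise the instance is trivial) one has $|V'|=|Q|+|R|N\le n^2$, so running the given $\al$-approximation on $G'$ returns a tree $H'$ with $V_{H'}\subs Q$ that dominates all of $R$ and has cost at most $\al(|V'|)\,\opt(\text{Partial})\le\al(n^2)\,\opt(\text{Subset})$, using that $\al$ is increasing and that $\opt(\text{Partial})\le\opt(\text{Subset})$. Since $H'$ is itself a feasible {\sc Subset Steiner CDS} solution, this is exactly the claimed $\al(n^2)$-approximation. The construction handles both cost models uniformly, the only difference being whether $\Gamma$ is placed on the copy nodes or on their incident edges.
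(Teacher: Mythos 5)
Your proposal is correct and follows essentially the same reduction as the paper: blow up each target $r\in R$ into enough copies (more than the at most $n$ ``free'' dominated nodes in $Q$) attached to $N_G(r)\cap Q$, make nodes outside $Q$ prohibitively expensive (the paper uses weight/cost $\infty$ where you use a calibrated finite $\Gamma$), and set the threshold $k$ so that missing even one target keeps the dominated count strictly below $k$. The remaining differences --- $n+1$ copies versus the paper's $|Q|+1$, and dropping the original $R$ nodes --- are cosmetic and do not change the argument.
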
 
\begin{proof}
Let us consider the case of node-weights.
Given a {\sc Subset Steiner CDS} instance $G,w,Q,R$
construct a  {\sc Partial CDS} instance $G',w',k$ as follows.
The graph $G'$ is obtained from $G$ by adding $|Q|$ copies $R_1,\ldots,R_{|Q|}$ of $R$,
and for each $r \in R$ connecting each copy $r_i \in R_i$ of $r$ to all nodes in $Q$ that dominate $r$.
We let $w'(v)=w(v)$ if $v \in Q$ and $w'(v)=\infty$ otherwise, and we let $k=(|Q|+1)|R|$.
In the obtained {\sc Partial CDS} instance, a subset of $Q$ that does not dominate $R$,
dominates at most $(|R|-1)(|Q|+1)+|Q|=k-1$ nodes; 
hence any feasible solution of finite weight must dominate $R$.
The {\sc Partial CDS} instance has $|Q||R|+ |R|+|Q| \leq n^2$ nodes, and the node weights case follows.

In the case of edge costs $G',k$ are as in the case of node weights, 
and the cost of an edge $uv$ of $G'$ is $c'(uv)=c(uv)$ if $u,v \in Q$ and $c'(uv)=\infty$ otherwise. 
The rest of the proof is the same as in the case of node-weights.
\qed
\end{proof}


\end{document}